\newtheorem{invariant}[theorem]{Invariant}
\definecolor{codegreen}{rgb}{0,0.6,0}
\definecolor{codegray}{rgb}{0.5,0.5,0.5}
\definecolor{codepurple}{rgb}{0.58,0,0.82}
\definecolor{backcolour}{rgb}{0.95,0.95,0.92}
\algrenewcommand\algorithmiccomment[1]{\hfill // #1}
\newcommand{\CAS}{\op{CAS}}
\newcommand{\MIN}{\mbox{Min}}
\newcommand{\MAX}{\mbox{Max}}
\newcommand{\f}[1]{\mbox{\textit{#1}}} %
\newcommand{\x}[1]{\mbox{\textit{#1}}} %
\newcommand{\algorithmictype}{\textbf{class}}
\newcommand{\op}[1]{{\sf #1}} %
\newcommand{\boldop}[1]{\mbox{\textbf{\textsf{#1}}}}
\renewcommand{\algorithmiccomment}[1]{\hfill\eqparbox{COMMENT}{$\triangleright$\ #1}}
\newcommand{\Linecomment}[1]{$\triangleright$\ #1}
\title{Concurrent Double-Ended Priority Queues\footnote{This research was supported by the Greek Ministry of Education, Religious Affairs and Sports though the HARSH project (project no. Y$\Pi$3TA - 0560901), which is carried out within the framework
of the National Recovery and Resilience Plan - Greece 2.0 - with funding from the European Union - NextGenerationEU.}}
\author{Panagiota Fatourou}{FORTH ICS, Greece \and University of Crete, Greece}{faturu@ics.forth.gr}{https://orcid.org/0000-0002-6265-6895}{}
\author{Eric Ruppert}{York University, Canada}{ruppert@eecs.yorku.ca}{https://orcid.org/0000-0001-5613-8701}{}
\author{Ioannis Xiradakis}{FORTH ICS, Greece \and University of Crete, Greece}{giannisx@ics.forth.gr}{https://orcid.org/0009-0009-4974-4901}{}
\authorrunning{P. Fatourou, E. Ruppert, I. Xiradakis}
\keywords{shared-memory, data structure, double-ended, priority queue, priority deque, heap, skip list, combining}
\begin{document}

\maketitle

\begin{abstract}
This work provides the first concurrent implementation specifically designed
for a double-ended priority queue (DEPQ).
We do this by describing a general way to add an \op{ExtractMax} operation
to any concurrent priority queue that already supports \op{Insert} and \op{ExtractMin}  operations.
The construction uses two linearizable single-consumer priority queues to build a
linearizable dual-consumer DEPQ (only one process can perform \op{Extract} operations at each end).
This construction preserves lock-freedom.
We then describe how to use a lock-based combining scheme to allow multiple consumers
at each end of the DEPQ.
To illustrate the technique, we apply it to a list-based
priority queue.
\end{abstract}

\section{Introduction}
\label{introduction}

Priority queues, which store a set of keys and support an \op{Insert} operation
that adds a key to the set 
and an \op{ExtractMin} operation
that removes and returns the minimum key, have long been recognized 
as an important data structure for concurrent systems.
They have been used in operating systems for job queues and load balancing, 
heuristic searches \cite{RK88},
graph algorithms (e.g., \cite{BGM96,HW91}) and
for event-driven simulations \cite{Gon76,Jon86}.  
There are numerous concurrent implementations of priority queues, including  \cite{BCP16,HIST10,HW17,Joh94,LJ13,LS12,Low18,RT21,SL00,ST05,TMR15}.

In the single-process setting, there has also been much research on the problem of designing a \emph{double-ended} priority queue (DEPQ),
which supports an \op{ExtractMax} operation to extract the maximum key, as well
as the operation that extracts the minimum key.
Applications of DEPQs include external sorting \cite{handbook-depq}, 
coding theory \cite{SSF11},
and constraint programming~\cite{Cymer}. 
The problem of implementing a concurrent DEPQ,
has not been explored previously.
We approach this problem by giving a general transformation to
construct a concurrent DEPQ from a concurrent implementation of
an ordinary (single-ended) priority queue.
We assume a shared-memory system with asynchronous processes and we
use linearizability \cite{HW90} as the correctness condition for data structure
implementations.

Many of the single-process DEPQ implementations are based on
more complex variants of heaps like
min-max pair heaps \cite{OOW91}, deaps \cite{Car87}, min-max heaps \cite{ASSS86}, or others surveyed by Sahni~\cite{handbook-depq}.
Designing a linearizable concurrent version of such a data structure directly
would be quite challenging because an insertion or extraction
typically requires making multiple changes to the data structure.
In a concurrent setting, it is difficult to ensure that all the changes
appear to take place atomically.
Instead, we take our inspiration from 
the insight that some of these sequential DEPQ data structures
can be viewed as being constructed from a pair of single-ended priority queues \cite{CS00}.
Even using this approach, an update operation on the DEPQ often requires
two operations on the underlying priority queues to appear atomic.
Our construction uses a lightweight synchronization mechanism to 
achieve this.
Designing simple mechanisms to build complex concurrent
data structures in a modular way
is an important goal and can be quite challenging.
Indeed, in spite of the simplicity of our basic construction, the 
argument that it produces a linearizable DEPQ is subtle.

Our general construction in \Cref{generic} shows how to use two linearizable,
concurrent priority queues to construct a linearizable \emph{dual-consumer} DEPQ,
which allows only one process to perform \op{ExtractMin} and one process
to perform \op{ExtractMax} at a time.
The construction works even if the priority queues used are
\emph{single-consumer}, meaning that 
only one process at a time may perform \op{ExtractMin} operations.
Our DEPQ relies on the synchronization mechanisms of the 
single-ended priority queues to coordinate the actions of insertions and extractions
within each priority queue, but we add a new layer of synchronization between
the two processes that perform \op{ExtractMin} and \op{ExtractMax} operations.
Our construction preserves the property of linearizability.
It also preserves the lock-free progress property:  if the priority queue is
lock-free, then so is the resulting DEPQ.
Some concurrent priority queue data structures (for example, those based on concurrent skip lists) could be augmented to support deletion of an arbitrary element, given a reference to the location of the element in the data structure.
In this case, we prove
time and space bounds for our DEPQ, depending on the bounds for the (single-ended)
priority queue.

We describe how to adapt our dual-consumer DEPQ
to produce a linearizable DEPQ that can
handle multiple concurrent \op{Extract} operations at each end in \Cref{combining}.
A simple solution is to add two locks:  one that must be acquired
by \op{ExtractMin} operations and one for \op{ExtractMax} operations.
To reduce the overhead of having each \op{Extract} operation 
acquire and release a lock, we show how to use the combining technique \cite{FK12,HIST10}.
This is particularly appropriate for \op{Extract} operations on a priority
queue because all \op{Extract} operations attempt to remove an item from the
same location in the data structure, causing a hotspot of contention \cite{FKK22,HIST10}.
At a high level, each process that acquires the lock for one end of the DEPQ
can perform a whole batch of \op{Extract} operations,
lowering the overhead for synchronization.
Insertions can proceed concurrently with one another and with \op{Extract} operations.

Our technique is general enough to be applied to any concurrent priority queue.
For ease of illustration, we provide an example in \Cref{example}
that applies our technique (including combining) to a simple
list-based priority queue.
Since a single-consumer priority queue suffices for our construction,
this enables us to optimize the list-based priority queue using the simplifying assumption that only
one process can perform \op{ExtractMin} operations on it at a time.
We also use this example to illustrate how the code generated by 
our generic construction can be simplified when it is applied to 
a particular priority queue implementation.
For example, we save space by sharing nodes between the two lists
that represent the two priority queues.

We believe techniques similar to the one introduced here
could be applicable to other settings where a new
data structure can be built by combining two simpler data structures.
For example,
a binary search tree threaded with additional pointers to predecessors and successors of each node %
can be thought of as a binary search
tree and a doubly-linked list that share the same nodes.
Building concurrent implementations of the two components separately
is easier than trying to design one that maintains consistency among
all pointers of both structures simultaneously.

\section{Related Work}
\label{related}

\subsection{Sequential DEPQs}
\label{related-seq}
There is a large literature on building DEPQs that can be accessed by just a single process.
Many of them build
a DEPQ from two ordinary priority queues:  \x{MinPQ} is organized
for efficient \op{ExtractMin} operations and \x{MaxPQ}
for efficient \op{ExtractMax}
operations.
Chong and Sahni \cite{CS00} surveyed these results, classifying them into three types.

In the \emph{dual queue} method both \x{MinPQ} and \x{MaxPQ} contain all the keys
of the DEPQ.   
It is assumed that the priority queues also support the deletion of an arbitrary
key.  Maintaining pointers between the two copies of the key in the two
priority queues allows an extraction on one to also delete the item in the other queue.

The \emph{total correspondence} method stores half the DEPQ's items in \x{MinPQ} and 
the other half in \x{MaxPQ} and maintains the property that there is
a one-to-one correspondence between the keys in the two queues so that each
key in \x{MinPQ} is less than or equal to its
corresponding key in \x{MaxPQ}.  
If the DEPQ has an odd number of keys one
key is stored in a special buffer.  Using pointers between pairs of
corresponding keys, a few keys are reshuffled using the buffer
to maintain the correspondence
when a key is inserted into or deleted from the DEPQ.
See \cite{CS99} for a discussion of several DEPQs of this form.

The third, less-general method, \emph{leaf correspondence}, is like total correspondence,
but only a subset of the items are included in the one-to-one correspondence between keys.

Another technique \cite{EJK08}
is to periodically rebuild a data structure so that the larger half
of the items are stored in \x{MaxPQ} and the smaller half in \x{MinPQ}.
When one of the two priority queues becomes empty, the data structure
is rebuilt, redistributing the keys.
This approach has good amortized complexity, but can also be deamortized.

\subsection{Concurrent Data Structures}

The sequential methods for building a DEPQ from standard priority queues described above
require making several updates to the priority queues to
implement an operation on the DEPQ.  Thus, it may be challenging
to use them in a concurrent setting because these multiple changes
cannot be done atomically.  In this paper, we describe one
way to adapt the dual queue method to obtain a concurrent DEPQ.
This requires a new mechanism for 
\op{ExtractMin} and \op{ExtractMax} operations to coordinate
with one another.
We do not assume the existence of a \op{Delete} operation on the basic priority queues.

Any linearizable concurrent priority queue can be used in our construction.
See \cite{DB08,MS18-48.8.1} for a survey of early work on concurrent priority queues,
including many based on sequential heap data structures for priority queues.
Tamir, Morrison and Rinetzky \cite{TMR15} added support for an operation that 
modifies an existing key in a lock-based concurrent heap.
Lowe \cite{Low18} designed a lock-free binomial heap, which includes support
for merging two priority queues.
Pugh's skip list data structure \cite{Pug90} has been used as the basis for 
several concurrent priority queues \cite{BCP16,LJ13,SL00,ST05}.
Liu and Spear \cite{LS12} gave a novel concurrent priority queue data structure based on a tree where each node stores a sorted linked list.

The concurrent priority queues discussed above allow multiple concurrent consumers.
For our construction, it suffices to have a single-consumer priority queue,
which may be possible to implement more efficiently.
Hoover and Wei \cite{HW17} give a wait-free implementation of a single-consumer priority queue that has $O(\log n + \log p)$ step complexity per operation,
where $n$ is the number of elements in the queue and $p$ is the number
of processes accessing the queue.

We are unaware of any previous work that aims to provide a linearizable concurrent DEPQ implementation.
Medidi and Deo \cite{MD94} described performing batches of operations on a DEPQ in parallel, but their focus
was on the synchronous PRAM model.

One way to use existing data structures to build a linearizable concurrent DEPQ is to use a binary search tree (BST),
where the tree is sorted by key values \cite[Section 5.2.3]{TAOCP3}.
There are a number of lock-free concurrent BST implementations (e.g., \cite{EFRvB10,NRM20}) that support insertion and deletion of keys.
The \op{Delete} operation in these BSTs could easily be modified to delete (and return)
the minimum or maximum key present in the BST to yield the \op{Extract} operations of a DEPQ.
Since repeated \op{Extract} operations at one end of the DEPQ could yield a lopsided
BST, it would likely be desirable to use a balanced BST, such as the concurrent
chromatic tree of \cite{BER14}.
However, even with balancing, this would require each \op{Extract} operation to
traverse a path of length $\Theta(\log n)$ in the BST when the DEPQ contains $n$ elements.
In contrast, \op{Extract} operations in the list-based DEPQ described in \Cref{example} finds the required key right at the beginning of the list, and requires
less restructuring compared to the rebalancing of chromatic trees.
There are also concurrent implementations of (single-ended) priority queues that augment a search tree with a sorted linked list of keys in the tree to expedite \op{ExtractMin} operations \cite{Joh94,RT21}.

\subsection{Combining}
\label{combining-related}

\emph{Software combining} is a technique for reducing the overhead of using locks.
The process that acquires the lock performs its own work as well as the work of
other processes that failed to acquire the lock.
Early versions \cite{GVW89, YTL87} used a tree to combine requests for work.
Hendler et al.\ \cite{HIST10} developed a more efficient approach called flat combining.
Fatourou and Kallimanis \cite{FK12} improved the efficiency using an
integrated scheme for both locking the data structure and
organizing the requests for operations on it.
See \Cref{combining} for more details.

Hendler et al.\ \cite{HIST10}
illustrated their combining technique by applying combining to a skip list to obtain
a concurrent (single-ended) priority queue.
They combined both \op{Insert} and \op{ExtractMin} operations so that they
could use a sequential priority queue as the underlying data structure,
which severely restricts parallelism.
In our construction, we use combining only for the \op{ExtractMin} operations,
so \op{Insert} operations may run concurrently.

\section{Constructing a Dual-Consumer DEPQ from Two Priority Queues}
\label{generic}

In this section, we describe how to build a linearizable dual-consumer DEPQ from two single-consumer (single-ended) priority queues.
The priority queues used in the construction support \op{Insert} and \op{ExtractMin} operations, while the DEPQ supports \op{Insert}, \op{ExtractMin} and \op{ExtractMax}
operations.

Let $R$ be a total order on a universe Key of possible keys.
Let $PQ(R)$ be a linearizable implementation  of a priority queue
that supports \op{Insert} and \op{ExtractMin} operations, where the latter operation
removes and returns an element that is minimal according to the relation $R$.
Let $<$ be a total order on Key and let $>$ be the opposite total order.  (That is,
$x<y$ if and only if $y>x$.)
To build a linearizable DEPQ we use one instance of $PQ(<)$
called \x{MinPQ} and one instance of $PQ(>)$ called \x{MaxPQ}.
(Thus, an \op{ExtractMin} on \x{MaxPQ} returns the largest key with respect to the total order $<$.)
Pseudocode for our implementation is given in \Cref{generic-alg}.

An \boldop{Insert} operation on the DEPQ simply inserts the key into both priority queues.
To coordinate extractions, each item has an associated \x{reserved} bit, which is initially 0.
An \boldop{ExtractMin} on the DEPQ repeatedly extracts the minimum element from \x{MinPQ}
and tries to set its \x{reserved} bit using a \op{TestAndSet} instruction
until it successfully changes the \x{reserved} bit of some item.  The key of that item
is then returned.
If, at any time, the \op{ExtractMin} observes that \x{MinPQ} is empty,
it terminates and indicates that the DEPQ is empty.
The \boldop{ExtractMax} operation on the DEPQ is symmetric to \op{ExtractMin}, using \x{MaxPQ}
in place of \x{MinPQ}.

The \x{reserved} bit ensures that an element cannot be returned twice by both an
\op{ExtractMin} and an \op{ExtractMax}.
An element removed from the DEPQ by an \op{ExtractMin} operation may remain in \x{MaxPQ} for
some time, but if it is eventually removed from \x{MaxPQ} by an \op{ExtractMax} operation,
the \op{ExtractMax} will skip the item because its \op{TestAndSet} operation will fail to 
set the item's \x{reserved} bit.

If the priority queue implementation we are using also supports a \op{Delete} operation,
then we can add the optional lines \ref{mindel} and \ref{maxdel} to the \op{Extract} operations.
The DEPQ is linearizable regardless of whether these lines are included or not.
When an \op{ExtractMin} operation removes an item from \x{MinPQ}, line \ref{mindel}
removes it from \x{MaxPQ}.  It is not strictly necessary to remove it from \x{MaxPQ}
because its \x{reserved} bit is set, so any \op{ExtractMax} operation that finds
it in \x{MaxPQ} would ignore it anyway.
Including lines \ref{mindel} and \ref{maxdel} allows us to prove the complexity
bounds in \Cref{complexity}.  Whether their inclusion improves performance
may depend upon the underlying priority queue: 
if deleting an arbitrary element is not too much more costly than extracting
the minimum, or if performance of the priority queues would degrade significantly due to the presence
of obsolete items, then it may be worthwhile to include lines \ref{mindel} and \ref{maxdel}.

\begin{algorithm}[t]
\begin{algorithmic}[1]
\Type item
	\State Key \x{key}
	\State boolean \x{reserved} 
\EndType

\medskip

\Function{Insert}{Key $x$}
	\State \x{item} $i :=$ new \x{item} with $\x{key} = x$, $\x{reserved}=0$
	\State $\x{MinPQ}.\op{Insert}(i)$ \label{minins}
	\State $\x{MaxPQ}.\op{Insert}(i)$ \label{maxins}
\EndFunction

\medskip

\Function{ExtractMin}{} : Key
	\While{true}  \Comment{repeatedly extract element from \x{MinPQ}}\label{exmin-loop}
		\State $x := \x{MinPQ}.\op{ExtractMin}$ \label{minex}
		\If{$x=nil$} \Return $nil$\label{minnil}\Comment{DEPQ is empty}
	    \ElsIf{$\op{TestAndSet}(x.\x{reserved}) =0$} \Comment{return when \op{TestAndSet} succeeds}
	     	\State \x{MaxPQ}.\op{Delete}($x$) \Comment{this line is optional}\label{mindel}
			\State \Return $x.\x{key}$ \label{mints}
	    \EndIf
	\EndWhile 
\EndFunction

\medskip

\Function{ExtractMax}{} : Key
	\While{true} \Comment{repeatedly extract element from \x{MaxPQ}}\label{exmax-loop}
		\State $x := \x{MaxPQ}.\op{ExtractMin}$ \label{maxex}
		\If{$x=nil$} \Return $nil$\label{maxnil} \Comment{DEPQ is empty}
	    \ElsIf{$\op{TestAndSet}(x.\x{reserved}) =0$} \Comment{return when \op{TestAndSet} succeeds}
	    	\State \x{MinPQ}.\op{Delete}($x$) \Comment{this line is optional}\label{maxdel}
	    	\State \Return $x.\x{key}$ \label{maxts}
	    \EndIf
	\EndWhile 
\EndFunction
\end{algorithmic}
\caption{Generic construction of a DEPQ from two priority queues.\label{generic-alg}}
\end{algorithm}

\subsection{Linearizability}

There are challenges in showing that the DEPQ is linearizable.
Since \x{MinPQ} and \x{MaxPQ} are updated separately by \op{Insert} operations,
their contents may not exactly match.
Moreover, since the \x{reserved} bit is updated \emph{after} removing the item
from \x{MinPQ} or \x{MaxPQ}, the order in which items' bits are set may be different
from the order in which they are removed from the priority queues.

Since we have assumed that the implementations of \x{MinPQ} and \x{MaxPQ} are linearizable, the composability property
of linearizability~\cite{HW90} allows us to consider operations applied to each of these priority queues as atomic steps. 
Thus, when we refer to a step later on, this step might be the execution of (an entire) such operation. 
For the sake of simplicity in our presentation, we assume that all keys inserted into the DEPQ are distinct.
This allows us to talk about the \op{Insert} that inserted the key extracted
by some \op{ExtractMax} or \op{ExtractMin} operation without ambiguity.

Fix an execution $\alpha$. We now describe how to linearize operations in $\alpha$.
An \op{Extract} operation that never performs an \op{ExtractMin} at line \ref{minex}
or \ref{maxex} is not linearized, since it never accesses shared memory.
For each \op{Extract} operation $e$ that does perform an \op{ExtractMin} at line
\ref{minex} or \ref{maxex}, let $last_e$ be the last step where $e$ does so,
and let $result_e$ be the key returned by this last \op{ExtractMin} performed by $e$.

\begin{enumerate}[L1.]
\item\label{ex-lin1}
An \op{Extract} operation $e$ is linearized at $last_e$ if either 
$result_e=nil$ or $e$ performs a successful
\op{TestAndSet} at line \ref{mints} or \ref{maxts}.
\item
Linearize each \op{Insert} operation at the earlier of
\begin{enumerate}
\item\label{ins-lin1}
its insertion into \x{MaxPQ} at line \ref{maxins}, or
\item\label{ins-lin2}
immediately before the \op{ExtractMin} operation on DEPQ that returns its item.
\end{enumerate}
If neither of these events occur, then the \op{Insert} is not linearized.
\end{enumerate}

The following two observations ensure that the linearization respects the real-time
order of operations.

\begin{observation}\label{assigned}
Every operation that terminates is assigned a linearization point.
\end{observation}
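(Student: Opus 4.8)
The plan is a short case analysis on the type of the terminating operation, exhibiting in each case the rule (L1 or L2) that supplies a linearization point. The structural fact I would isolate first is that the control flow of each operation forces a particular shared-memory step to be executed before the operation can return.

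\emph{Insert.} The body of \op{Insert} is straight-line code whose last step is line~\ref{maxins}, so a terminating \op{Insert} has performed its insertion into \x{MaxPQ}. Rule L2 then assigns it the earlier of that insertion (its first option) and the point immediately before the DEPQ \op{ExtractMin} that returns its item (its second option); since the event of the first option has occurred, at least one of the two events exists, and a linearization point is assigned.

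\emph{Extract.} Let $e$ be a terminating \op{ExtractMin}; the \op{ExtractMax} case is symmetric. The \textbf{while} loop of \op{ExtractMin} is exited only through line~\ref{minnil} or line~\ref{mints}, and reaching either line within an iteration requires having executed line~\ref{minex} earlier in that same iteration; in particular, a terminating \op{Extract} performs at least one underlying \op{ExtractMin} on \x{MinPQ} and so takes at least one shared-memory step. Consequently $last_e$ and $result_e$ are well defined, and the iteration from which $e$ returns is exactly the one containing $last_e$: if $last_e$ lay in an earlier iteration, that iteration would not have returned and would be followed by a further execution of line~\ref{minex}, contradicting the maximality of $last_e$. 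Now split on the exit line. If $e$ exits at line~\ref{minnil}, then the \op{ExtractMin} at $last_e$ returned $nil$, so $result_e = nil$. If $e$ exits at line~\ref{mints}, then in its final iteration $e$ executed a successful \op{TestAndSet} on $result_e$. In either case the hypothesis of rule L1 is met, so $e$ is linearized at $last_e$.

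I do not expect a genuine obstacle. The one point needing care is the bookkeeping that identifies the successful \op{TestAndSet} (or the $nil$ return) of $e$'s final iteration with the designated step $last_e$, which is exactly the maximality argument above; the rest follows by inspection of the pseudocode. Note that this observation asserts only the \emph{existence} of a linearization point for each terminating operation; verifying that these points are consistent with the real-time order of operations --- immediate for \op{Extract} and for the first option of rule L2, and the substantive content of its second option --- is the concern of the next observation.
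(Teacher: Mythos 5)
Your proof is correct and takes the same approach as the paper's, which is a brief case analysis on operation type (terminating \op{Insert} must execute line~\ref{maxins}, terminating \op{Extract} must exit via one of the four return lines, so a rule applies). You simply spell out in more detail the bookkeeping --- that the returning iteration is the one containing $last_e$, by maximality --- which the paper leaves implicit.
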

\begin{proof}
If an \op{Extract} operation $e$ terminates, 
it satisfies one of the exit conditions at line \ref{minnil}, \ref{mints}, \ref{maxnil} or \ref{maxts}, so it is linearized according to L\ref{ex-lin1}.
If an \op{Insert} terminates,
it must execute line \ref{maxins}, so it is assigned a linearization point.
\end{proof}

\begin{observation}\label{real-time}
If an operation is linearized, its linearization point is in its execution interval.
\end{observation}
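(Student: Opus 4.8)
The plan is to check each of the two linearization rules case by case, showing in every case that the point assigned lies between the invocation and the response of the operation in question. First I would handle the \op{Extract} operations governed by rule~L\ref{ex-lin1}. Such an operation $e$ is linearized at $last_e$, the time of its last execution of line~\ref{minex} or~\ref{maxex}. This step is by definition performed by $e$ itself, so it occurs strictly after $e$'s invocation. For the upper bound: if $result_e = nil$ then $e$ returns at line~\ref{minnil} or~\ref{maxnil} immediately after $last_e$, so the response comes after $last_e$; if instead $e$ performs a successful \op{TestAndSet} at line~\ref{mints} or~\ref{maxts}, that \op{TestAndSet} happens after $last_e$ (it acts on the value $x$ returned by the \op{ExtractMin} at $last_e$) and $e$ returns right after it, so again $last_e$ precedes the response. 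Hence the linearization point of $e$ lies in its interval.

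Next I would handle \op{Insert} operations. An \op{Insert} linearized under rule~L\ref{ins-lin1} is placed at its own execution of line~\ref{maxins}, which trivially lies in its interval. The interesting case is rule~L\ref{ins-lin2}, where an \op{Insert} operation $I$ inserting item $i$ is linearized immediately before the \op{ExtractMin}-on-DEPQ operation $e$ that returns $i$ --- and this case only applies when $I$ was \emph{not} already linearized under~L\ref{ins-lin1}, i.e.\ when $e$'s linearization point precedes $I$'s execution of line~\ref{maxins}. Here I must argue two things: (i) $I$ has already been invoked at the time it is linearized, and (ii) $I$ has not yet returned at that time. For (i): the item $i$ returned by $e$ was obtained by $e$'s \op{ExtractMin} from \x{MinPQ} at line~\ref{minex}, so $i$ must already have been inserted into \x{MinPQ} at line~\ref{minins} by $I$ before that \op{ExtractMin}; since line~\ref{minins} comes after $I$'s invocation, and $I$'s linearization point (just before $e$'s linearization point, which is at or before that \op{ExtractMin}) lies after $I$'s invocation. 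For (ii): since this case only applies when $e$ is linearized before $I$ executes line~\ref{maxins}, and $I$ does not return until after it executes line~\ref{maxins}, the point just before $e$'s linearization point is certainly before $I$'s response.

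The main obstacle here is the second part of the \op{Insert} argument: one must be careful that rules~L\ref{ins-lin1} and~L\ref{ins-lin2} do not conflict, and that when~L\ref{ins-lin2} is the operative rule, $e$'s linearization point genuinely sits inside $I$'s execution interval rather than merely after its invocation. The key is to use the definition of the ``earlier of'' in rule~L2 to conclude that~L\ref{ins-lin2} applies precisely when $e$ is linearized before $I$'s line~\ref{maxins}, which pins the linearization point before $I$'s response; combined with the fact that $i \in \x{MinPQ}$ forces $I$'s line~\ref{minins} to precede $e$'s extraction, this squeezes the point into $I$'s interval. A subtle point worth stating explicitly is that the \op{ExtractMin}-on-DEPQ operation $e$ referred to in~L\ref{ins-lin2} is itself linearized (by~L\ref{ex-lin1}, since it returns a non-$nil$ key via a successful \op{TestAndSet}), so ``immediately before $e$'s linearization point'' is well-defined.
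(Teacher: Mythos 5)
Your proposal is correct and takes the same approach as the paper's proof: \op{Extract} operations and \op{Insert}s under rule~L\ref{ins-lin1} are linearized at one of their own steps, and an \op{Insert} under rule~L\ref{ins-lin2} is sandwiched between its own line~\ref{minins} (because the extracting operation must have found the item already in \x{MinPQ}) and its own line~\ref{maxins} (by the ``earlier of'' clause). The paper states this in two sentences while you spell out the same reasoning in full detail; your added explicit note that the referenced \op{ExtractMin} is itself linearized by~L\ref{ex-lin1}, so ``immediately before'' is well-defined, is a reasonable clarification the paper leaves implicit.
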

\begin{proof}
An \op{Insert} linearized according to L\ref{ins-lin2} is before it executes
line \ref{maxins} and after it executes line \ref{minins} (since some \op{Extract}
operation removes its element from the DEPQ), so the \op{Insert}
is linearized during its execution interval.
All other operations are linearized at one of their own steps.
\end{proof}

We are now ready to prove the main lemma, which shows that the results returned in the concurrent execution $\alpha$ are consistent with the linearization.
\begin{lemma}\label{right-responses}
In the sequential execution defined by the linearization,
each \op{Extract} operation $e$ returns $result_e$. 
\end{lemma}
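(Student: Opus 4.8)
The plan is to prove \Cref{right-responses} by establishing the sequential-DEPQ semantics for the linearization order, handling separately the case where an \op{Extract} returns $nil$ and the case where it returns a key. The central bookkeeping device is a correspondence between keys returned by \op{Extract} operations and the items whose \x{reserved} bit they set: since \op{TestAndSet} succeeds at most once per item, each item is ``claimed'' by at most one \op{Extract}, and by the choice of $last_e$ and $result_e$ this claimed item carries the key $result_e$. So the first step is to observe that the map $e \mapsto$ (the item $e$ reserves) is injective on the set of \op{Extract} operations that return a non-$nil$ value, and that an item is reserved only after the \op{Insert} that created it has run line \ref{minins} and \ref{maxins}.

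Next I would argue the two directions of consistency. For an \op{Extract} $e$ that returns a key $k = result_e$: I need to show that in the linearization, at the point $last_e$, the key $k$ is present in the DEPQ (so some \op{Insert} of $k$ is linearized before $last_e$ and not yet extracted) and moreover that $k$ is the \emph{minimum} (resp.\ maximum) key present at that point. Presence follows from the \op{Insert} linearization rule L2: the \op{Insert} of $k$ is linearized at line \ref{maxins} or, by clause L\ref{ins-lin2}, immediately before the \op{ExtractMin}-on-DEPQ that returns it, which is $e$ itself — in either case before $last_e$; and no earlier \op{Extract} can have been linearized returning $k$, by injectivity. The harder part is minimality: I must rule out the existence of some other key $k' < k$ that, according to the linearization, is in the DEPQ at time $last_e$ but has not been extracted. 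The argument uses the fact that $last_e$ is $e$'s \emph{last} \op{ExtractMin} on \x{MinPQ}, that this \op{ExtractMin} returned $k$ (so \x{MinPQ} contained no key smaller than $k$ at that moment by linearizability of \x{MinPQ}), and the \op{Insert} linearization rule — if $k'$ were linearized into the DEPQ before $last_e$ via clause L\ref{ins-lin1} it would already be in \x{MinPQ}, contradicting that \x{MinPQ}'s \op{ExtractMin} returned the larger $k$; the clause L\ref{ins-lin2} case requires noting that a key linearized that way is extracted immediately, hence not ``present and unextracted.'' One must also handle keys that \emph{are} in \x{MinPQ} but whose \x{reserved} bit was already set by an \op{ExtractMax}: such a key was returned by that \op{ExtractMax}, so it is not present in the DEPQ at time $last_e$, and the corresponding \op{ExtractMax} is linearized earlier — this is where \Cref{real-time} and the placement of \op{ExtractMax}'s linearization point at its own $last$ matter.

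For an \op{Extract} $e$ with $result_e = nil$: here $e$'s last \op{ExtractMin} on \x{MinPQ} returned $nil$, so by linearizability of \x{MinPQ} the queue \x{MinPQ} was empty at $last_e$, meaning every key ever inserted into \x{MinPQ} had already been extracted from \x{MinPQ} by then. I would then show this forces the DEPQ to be empty at $last_e$ in the linearization: any key $k$ that some \op{Insert} put into the DEPQ and that is linearized before $last_e$ must have had its \op{Insert} run line \ref{minins}; that key was subsequently removed from \x{MinPQ} by some \op{ExtractMin}-on-\x{MinPQ} call, and a short argument (again using that \x{reserved} is set at most once, and that every such removal is eventually ``charged'' to an \op{Extract} that reserves it — or to an \op{ExtractMax} via line \ref{mindel}) shows $k$ has been extracted from the DEPQ before $last_e$. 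I expect this to require a little care about items removed from \x{MinPQ} by an \op{ExtractMin} operation whose own \op{TestAndSet} then \emph{failed} (because an \op{ExtractMax} had claimed the item): that item is genuinely gone from the DEPQ, and the claiming \op{ExtractMax} is linearized before — so the accounting still closes.

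The main obstacle I anticipate is the minimality argument in the non-$nil$ case, specifically reconciling the two different ``clocks'' at play: the order in which items leave \x{MinPQ} versus the order in which their \x{reserved} bits get set (which the paper flags as the key subtlety). The \op{Insert} linearization rule L2 is evidently designed precisely to paper over this mismatch — a key that has left \x{MinPQ} but sits unreserved in \x{MaxPQ} should not count as ``in the DEPQ,'' and clause L\ref{ins-lin2} retroactively delays its insertion point to just before it is extracted, while a key reserved by an \op{ExtractMax} but still physically in \x{MinPQ} is handled by the \op{ExtractMax}'s own linearization point. Getting the case analysis over ``where is key $k'$ physically, and is its \x{reserved} bit set, and by whom'' exhaustive and airtight is the crux; I would organize it as a case split on (i) $k' \in \x{MinPQ}$ at $last_e$ with \x{reserved} unset — impossible since \x{MinPQ}'s \op{ExtractMin} returned $k > k'$; (ii) $k' \in \x{MinPQ}$ with \x{reserved} set — claimed by an \op{ExtractMax} linearized earlier, so extracted; (iii) $k' \notin \x{MinPQ}$ — removed by some \op{ExtractMin}-on-\x{MinPQ}, and then either that call reserved it (extracted) or a prior \op{ExtractMax} did (extracted), in either case not present; and in all cases conclude $k'$ is not a present-and-unextracted key, contradicting the assumption.
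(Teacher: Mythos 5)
Your plan is essentially the paper's own argument: linearize \op{Insert}s at line~\ref{maxins} or retroactively just before the \op{ExtractMin} that returns them, linearize \op{Extract}s at their last internal \op{ExtractMin}, and use the \x{reserved} bit plus the single-consumer discipline to argue that what's ``present and unextracted'' in the linearization at the point $last_e$ is exactly what sits in the relevant priority queue $Q$ at the configuration $C$ just before $last_e$. The paper packages this more cleanly as a single set-inclusion — every key in the linearized DEPQ at the end of $L_e$ is in $Q$ at $C$, proved via two short claims (one per direction of ``missing'') — and then derives both the $nil$ case and the non-$nil$ case as immediate corollaries; your version re-derives the same facts inline within each case, which costs you a somewhat sprawling case split.

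Two points need shoring up. First, you never state the strong induction on the linearization order, yet your argument leans on it: ``no earlier \op{Extract} can have been linearized returning $k$'' and ``$k'$ was returned by that \op{ExtractMax}, so it is not present'' both conflate a concurrent return value $result_{e'}$ with what $e'$ returns in the \emph{sequential} execution, and the equality of those two is exactly the lemma, applied to operations linearized before $e$. The paper's proof invokes this induction hypothesis explicitly at both corresponding points. Second, your case split is not quite right as stated: case (ii) ($k'\in\x{MinPQ}$ at $C$ with \x{reserved} set) is vacuous — the priority queue $\x{MinPQ}$ knows nothing about \x{reserved} bits, so $k'<k$ in $\x{MinPQ}$ at $C$ already contradicts the internal \op{ExtractMin} returning $k$, regardless of the bit; meanwhile, the genuinely necessary subcase (an \op{Insert}($k'$) linearized by rule L\ref{ins-lin1} but whose line~\ref{minins} ran before line~\ref{maxins}, so $k'$ has indeed reached \x{MinPQ}) versus (linearized by rule L\ref{ins-lin2}, in which case $k'$ is by construction already extracted) is left implicit. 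The paper's Claim 1 handles exactly this. Neither issue is fatal to the plan, but both must be closed before this is a proof rather than a sketch.
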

\begin{proof}
We prove the claim by strong induction.
Consider any \op{Extract} operation $e$ in the linearization.
Let $L_e$ be the sequential execution defined by the prefix of the linearization of $\alpha$ that consists of all operations prior to $e$ in the linearization.
As our induction hypothesis, assume that all \op{Extract} operations in $L_e$ satisfy the claim.
We must show that $e$ returns $result_e$ in the sequential execution.

Let $C$ be the configuration just before the linearization point of $e$.
Let $Q$ be \x{MinPQ} if $e$ is an \op{ExtractMin} or \x{MaxPQ} if $e$ is an \op{ExtractMax}.
Our goal is to show that every key that is in the DEPQ at the end of $L_e$ is
in $Q$ at $C$.  We do this by proving the following two claims.

\begin{claim}\label{inserted}
All keys inserted into but not extracted from the DEPQ during $L_e$ are inserted into $Q$ before $C$.
\end{claim}

\begin{claimproof}
First, in the case where $e$ is an \op{ExtractMin}, we show that all keys inserted by operations in $L_e$ are inserted into \x{MinPQ}.
Consider any key $k$ for which there is an \op{Insert}($k$) in $L_e$.
If the \op{Insert} is linearized at line \ref{maxins} by rule L\ref{ins-lin1}, then $k$ has
been inserted into \x{MinPQ} at line \ref{minins} before $C$.
Otherwise, if the \op{Insert} is linearized by rule L\ref{ins-lin2},
then it has been removed from \x{MinPQ} by an \op{ExtractMin} before~$C$,
so it must have been added to \x{MinPQ} before $C$.
Thus, all keys inserted by operations in $L_e$ have been inserted into \x{MinPQ} before $C$.

Now, consider the case where $e$ is an \op{ExtractMax}.
We again argue that all keys inserted by operations in $L_e$ are inserted into \x{MaxPQ}. 
Every key whose insertion in $L_e$ is linearized according to L\ref{ins-lin1}
has been inserted into \x{MaxPQ} before $C$.
Now consider any key whose insertion in $L_e$ is linearized by L\ref{ins-lin2}.
Then this key has been extracted during $L_e$, so \Cref{inserted} holds trivially.
\end{claimproof}

\begin{claim}
The key of any item removed from $Q$ before $C$ is also removed by an \op{Extract} operation on the DEPQ in $L_e$.
\end{claim}
\begin{claimproof}
Let $k$ be the key of some item \x{item} removed from $Q$ before $C$.
We argue that some \op{Extract} operation performed a \op{TestAndSet}
on the \x{reserved} bit of \x{item} before $C$.
If $e$ itself did the \op{Extract} operation on $Q$ that removed \x{item} before $C$, then 
it must have done so in an iteration of the loop at line \ref{exmin-loop} or \ref{exmax-loop} prior to the iteration that
contains $e$'s linearization point of $e$, which is after $C$.
So, $e$ performed a \op{TestAndSet} on \x{item} in that prior iteration.
If another \op{Extract} operation on $Q$ removed \x{item},
that operation terminated before $e$ began (since there is only
one \op{Extract} operation of each type can be running at any time),
so it must have performed the \op{TestAndSet} on \x{item} before $e$ began.
Finally, if some operation performed a \op{Delete} on $Q$ that removed \x{item} (at line \ref{mindel} or \ref{maxdel}), then it performed a \op{TestAndSet} on \x{item}'s \x{reserved} bit on the previous line.
In all cases, some process performed a \op{TestAndSet} on \x{item}'s \x{reserved} bit before $C$.

Whichever \op{Extract} operation $e'$ performed the successful
\op{TestAndSet} on this bit
was linearized before $C$ and therefore is in $L_e$.
Moreover, $result_{e'}=k$ since $e'$ performs the successful \op{TestAndSet} on  \x{item}.\x{reserved} 
in its last iteration of the loop at line \ref{exmin-loop} or \ref{exmax-loop} (by the exit condition
on line \ref{mints} or \ref{maxts}), and $e'$ gets \x{item} from its \op{Extract} operation at line \ref{minex} or \ref{maxex} of that iteration.
So, by the induction hypothesis, $e'$ returns $k$ in $L_e$.
\end{claimproof}

It follows from the two claims that every key that is in the DEPQ at the end of $L_e$ is also in $Q$ at~$C$.
Recall that $result_e$ is the result of the \op{Extract} on $Q$ that $e$ performs
in the next step after $C$.
We now consider two cases to complete the induction step.

If $result_e=nil$ then the DEPQ is empty at the end of $L_e$,
so $e$ returns nil in the sequential execution defined by the linearization.

If $result_e$ is a non-\x{nil} key $k$, then we first show that an \op{Insert}($k$)
is in $L_e$.
If $e$ is an \op{ExtractMax}, then $e$ extracted $k$ from \x{MaxPQ}, so
its \op{Insert} was linearized before $C$ when $k$ was inserted into \x{MaxPQ}.
If $e$ is an \op{ExtractMin}, rule L\ref{ins-lin2} guarantees that the
\op{Insert}($k$) was linearized before $C$.
We next argue that $k$ has not been removed from the DEPQ
by any \op{Extract} operation in $L_e$.
If there were such an \op{Extract} operation $e'$ in $L_e$, it would have 
set the \x{reserved} bit of $k$ in $\alpha$, which is impossible since $e$ sets
the \x{reserved} bit of $k$ in $\alpha$.
So, $k$ is in the DEPQ at the end of $L_e$.
 Since every key in the DEPQ at the end of $L_e$ is in $Q$ at $C$,
 it follows that the result $k$ returned by $e$ in the concurrent execution is the returned by $e$ in the sequential execution defined by the linearization.
\end{proof}

Combining \Cref{assigned}, \Cref{real-time} and \Cref{right-responses}
yields the following result.

\begin{theorem}
If \x{MinPQ} and \x{MaxPQ} are linearizable (single-consumer) priority queues, then
\Cref{generic-alg} is a linearizable
dual-consumer DEPQ.
\end{theorem}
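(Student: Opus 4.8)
The plan is to check that rules L\ref{ex-lin1} and L\ref{ins-lin1}--L\ref{ins-lin2} define a legal linearization of an arbitrary execution $\alpha$: a total order $\pi$ on a set of operations that contains every completed operation, is consistent with the real-time order of $\alpha$, and is such that running the operations sequentially in the order $\pi$ yields a legal execution of the sequential DEPQ. Since \x{MinPQ} and \x{MaxPQ} are linearizable, the composability property of linearizability lets us treat each call to their \op{Insert}, \op{ExtractMin}, and \op{Delete} as a single atomic step, so every point named by the rules --- an execution of line~\ref{minex}, \ref{maxex}, \ref{maxins}, or \ref{minins}, or a \op{TestAndSet} --- is an atomic step of $\alpha$, and distinct operations name distinct steps. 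The only non-step ``point'' is the ``immediately before'' of rule L\ref{ins-lin2}; since the \op{ExtractMin} it refers to returns exactly one item, at most one \op{Insert} is placed immediately before it, so $\pi$ is unambiguous.

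First I would verify the real-time requirement. \Cref{assigned} shows every completed operation is linearized, and \Cref{real-time} shows each linearized operation is linearized at a point inside its own execution interval. Hence if an operation $op_1$ completes before an operation $op_2$ begins, then $op_1$ precedes $op_2$ in $\pi$; operations that are pending at the end of $\alpha$ and are not linearized impose no constraints. So $\pi$ respects the real-time order of $\alpha$.

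Next I would show the sequential execution $S$ determined by $\pi$ is legal for a DEPQ. Only \op{Extract} operations return a value, and \Cref{right-responses} already establishes that in $S$ every \op{Extract} operation $e$ returns $result_e$, so it remains to argue that $result_e$ is the response mandated by the DEPQ specification at $e$'s position in $S$. Let $Q$ be \x{MinPQ} or \x{MaxPQ} according to the type of $e$, and let $C$ be the configuration of $\alpha$ just before $e$'s final \op{ExtractMin} on $Q$. The proof of \Cref{right-responses} shows that the set of keys present in the DEPQ just before $e$ in $S$ is contained in the set of (non-extracted) keys of $Q$ at $C$, and that when $result_e = k \neq nil$ this set also contains $k$. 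If $result_e = nil$, then $Q$ is empty at $C$, so the DEPQ is empty just before $e$ in $S$ and $nil$ is correct. If $result_e = k \neq nil$, then $e$'s \op{ExtractMin} on $Q$ returns $k$, so $k$ is the minimum of $Q$ at $C$ with respect to $Q$'s order; since the DEPQ's keys at that point form a subset of $Q$ containing $k$, the key $k$ is also the minimum (respectively maximum) of the DEPQ with respect to $<$, which is exactly the response required of \op{ExtractMin} (respectively \op{ExtractMax}). Thus $S$ is legal, and combining this with the real-time property proves the theorem.

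The substance of the argument is not this assembly but \Cref{right-responses} itself, and within it the inductive claim that the DEPQ's contents in $S$ are always captured by $Q$ at $C$. The hard part there is reconciling two sources of skew: \x{MinPQ} and \x{MaxPQ} are filled by \op{Insert} at different instants, and each \x{reserved} bit is set only \emph{after} its item has left a priority queue, so the order of successful \op{TestAndSet}s need not match the order in which items are extracted. Rule L\ref{ins-lin2} --- linearizing an \op{Insert} claimed by an \op{ExtractMin} immediately before that \op{ExtractMin} --- is exactly the device that hides the first source of skew, and the second is handled by noticing that an item removed from $Q$ before $C$ must already have had its \x{reserved} bit set (using that at most one \op{Extract} of each type runs at a time), hence was claimed by an \op{Extract} linearized before $e$. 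Once \Cref{right-responses} is granted, the theorem is immediate from \Cref{assigned} and \Cref{real-time}.
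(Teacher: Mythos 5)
Your proposal is correct and takes essentially the same approach as the paper, whose proof of this theorem is just the single line ``Combining \Cref{assigned}, \Cref{real-time} and \Cref{right-responses} yields the following result.'' The one thing you spell out that the paper leaves implicit inside the proof of \Cref{right-responses}---that $result_e$ must be the extreme key of the DEPQ at $e$'s position in the sequential execution, because the DEPQ's contents there are a subset of $Q$'s contents at $C$ and contain $result_e$---is a correct unpacking of the paper's ``it follows that'' step, so this is a more explicit rendering of the same argument rather than a different route.
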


We remark that \Cref{generic-alg} is \emph{not} linearizable if multiple processes can perform
\op{Extract} operations at the same end of the DEPQ.
For example, consider the following non-linearizable execution.
\begin{itemize}
\item Two items 1 and 2 are inserted into both \x{MinPQ} and \x{MaxPQ}.
\item Then, an \op{ExtractMin} $E_1$ removes 1 from \x{MinPQ} and goes to sleep.
\item Next, a second \op{ExtractMin} $E_2$ removes 2 from \x{MinPQ}, performs a successful \op{TestAndSet} and returns 2.
\item Finally, an \op{ExtractMax} $E_3$ removes 2 from \x{MaxPQ} but fails the \op{TestAndSet}, so it removes 1 from \x{MaxPQ}, does a successful \op{TestAndSet} on that item and returns 1.
\end{itemize}
$E_2$ terminates before $E_3$ begins, so these operations must be linearized
in this order.  $E_1$ cannot be linearized before $E_3$ (since then $E_2$ and $E_3$ could not both return non-nil values).
In the sequential execution corresponding to the linearization, $E_2$ should return 1, which is not what it returns in the concurrent execution.

\subsection{Progress and Complexity Bounds}
\label{complexity}

The \op{Insert} operation on the DEPQ is lock- or wait-free if insertions
on the underlying priority queues are.
The \op{Extract} operations on the DEPQ is lock-free if extractions (and deletions,
if lines \ref{mindel} and \ref{maxdel} are included) on the underlying priority 
queues are.  This is because the \op{TestAndSet} on line \ref{mints} or \ref{maxts}
can fail only if some other \op{Extract} operation has successfully performed the
\op{TestAndSet}, and that other \op{Extract} operation is guaranteed to terminate.

\op{Extract} operations on the DEPQ are not wait-free, even 
if the underlying priority queue is wait-free because one \op{Extract}
operation may repeatedly fail its \op{TestAndSet} in every iteration.
Consider an initially empty DEPQ where one process repeatedly inserts
an item into \x{MinPQ} and \x{MaxPQ} and then removes it from \x{MinPQ} and successfully
performs a \op{TestAndSet} on its \x{reserved} bit and returns it.  
Meanwhile, another process performs an \op{ExtractMax} operation in which each
iteration removes an item from \x{MaxPQ} but fails its \op{TestAndSet} on 
its reserved bit.  The second process may continue forever.
Modifying the construction to make it wait-free would probably require some
coordination between the two processes performing \op{Extract} operations at opposite ends of the DEPQ.

Next, we prove  bounds on the space and step complexity of our
DEPQ.  They hold if lines \ref{mindel} and \ref{maxdel} are included
in the \op{Extract} routines.  
We provide amortized bounds on step complexity, which bound
the average time per operation in the worst-case execution if the underlying priority queue is lock-free.
More precisely, saying that the amortized step complexity is $O(T_i)$ for insertions and $O(T_e)$ for an \op{Extract} operation means that, for every (finite) execution in which $m_i$ \op{Inserts} and $m_e$ \op{Extracts} are invoked, the total number of steps in the execution is $O(m_i\cdot T_i + m_e \cdot T_e)$.

Let $n$ be the maximum number of elements in the data structure at any time, if operations are performed sequentially in the order of the linearization.
Let $c$ be the maximum point contention, that is, the maximum 
number of operations that are active at any one time.
Assume that the underlying priority queues have space complexity
$S(n,c)$ and that the (amortized) step complexity for their \op{Insert}, \op{ExtractMin} and \op{Delete} operations are
$T_i'(n,c), T_e'(n,c)$ and $T_d'(n,c)$, respectively.

The maximum number of elements that are ever in \x{MinPQ} or \x{MaxPQ}
is $O(n+c)$.
Thus, the amortized step complexity for \op{Insert} and \op{Extract} operations
on the DEPQ are $O(T_i'(n+c,c))$ and $O(T_e'(n+c,c) + T_d'(n+c,c))$, respectively.
The amortized bound for \op{Extract} operations follows from the fact that the cost
of each iteration of its loop that fails its \op{TestAndSet} can be charged
to the \op{Extract} operation that performs the successful \op{TestAndSet} on the same item.
Each \op{Extract} operation $e$ is charged for at most one iteration of at most one other \op{Extract},
so this contributes only $T_e'(n+c,c)+T_d'(n+c,c)+O(1)$ steps to $e$'s amortized step
complexity.

If we have bounds on the \emph{expected} step complexity of the underlying
data structure (for example, if the data structure is randomized, like a skip list)
then the bounds described above hold for the expected amortized step complexity of 
operations on the DEPQ.
If $T_i$ is a bound on the worst-case time per insertion into the underlying priority
queue, rather than an amortized bound, then the bound for insertions on the DEPQ
is also a worst-case bound.

\section{Constructing a Multi-Consumer DEPQ from a Dual-Consumer DEPQ}
\label{combining}

The construction provided in \Cref{generic} only permits a single \op{Extract}
operation at each end of the DEPQ at a time.
If multiple processes wish to perform \op{Extract} operations, we could
use two locks:  one for each end.
A process must acquire the
lock for one end of the DEPQ before performing an \op{Extract} on that end.
\op{Insert} operations can proceed regardless of the locks.

To reduce the overhead required by locking, we can use the combining technique mentioned in \Cref{combining-related},
wherein one process is selected at each point in time to perform the pending operations of a group of processes
waiting to acquire the lock.
We describe how to use the combining approach on our DEPQ with the 
combining algorithm of Fatourou and Kallimanis \cite{FK12}, called CC-Synch.

In CC-Synch, a list is employed to store the operations
of active processes. 
CC-Synch implements a lock, which is used by the processes to choose one process, 
called a \emph{combiner}, which is delegated to perform a batch of pending operations.
After announcing its requested operation, 
each process $p$ that is not chosen as the combiner simply waits
(by performing local spinning) until the
combiner informs $p$ that its requested operation has been completed and
provides the result of its operation.
Roughly speaking, CC-Synch implements
a combining-friendly variant of a queue lock~\cite{C93,MLH94,MCS91},
and the queue that is used to implement the lock is also used to store the
operations of the active processes in FIFO order.
This design yields better performance than other combining algorithms \cite{FK12}.

We use two instances of CC-Synch, one for each end of the DEPQ.
A process adds its request to do an \op{Extract} operation on the DEPQ to 
the FIFO queue of the appropriate instance of CC-Synch.
When CC-Synch chooses a combiner process to perform a batch of 
\op{Extract} operations from the FIFO queue, the combiner performs the \op{Extracts}  one by one
using \Cref{generic-alg}.  Since CC-Synch ensures that only one process
acts as a combiner at a time, this satisfies \Cref{generic-alg}'s requirement
that there be only one process performing \op{Extract} operations at each end of the DEPQ.
This approach
improves locality in accessing the data structure, since the combiner process 
performs an entire batch of \op{Extract} operations.  
It also avoids having a hotspot of contention because multiple 
\op{Extract} operations would typically access the same part of the data structure.
We remark that there is little contention on the \x{reserved} bits of items:
only the two \op{Extract} operations that extract an item from \x{MinPQ} and \x{MaxPQ}
can ever access its \x{reserved} field, and the \x{reserved} field is ignored
by \x{Insert} operations.

\section{Example: a List-Based DEPQ}
\label{example}

\newcommand{\nt}{\mbox{$\x{next}[\x{type}]$}}
\newcommand{\head}{\mbox{$\x{Head}[\x{type}]$}}
\newcommand{\ld}{\mbox{$\x{LastDeleted}[\x{type}]$}}

State-of-the-art concurrent priority queues are based on skip lists \cite{LJ13,ST05}.
They  maintain a sorted singly-linked list of keys in the set 
with additional levels of pointers to expedite accesses to the list.
These concurrent priority queues can be used as a black box in our construction
to obtain a DEPQ.
However, most concurrent priority queues are quite complex, so 
we use a simpler example to illustrate our construction.
This example uses a singly-linked list as the basic priority queue, 
combining ideas from Lind\'{e}n
and Jonsson's priority queue  based on a skip list \cite{LJ13} with
the combiner of Fatourou and Kallimanis \cite{FK12}.
This allows us to describe several optimizations that can be made when 
using priority queues based on lists (or skip lists).
The resulting DEPQ described in this section is the first linearizable
concurrent data structure designed to implement a DEPQ that we are aware of.

\begin{algorithm}[t]
\begin{algorithmic}
\State constant int \MIN\ := 0
\State constant int \MAX\ := 1
\State

\Type Node
	\State Key \x{key}
	\State boolean \x{reserved}
	\State array[\MIN..\MAX] of $\langle \mbox{Node*, boolean}\rangle$ \x{next} \Comment{pointer and bit are stored in a single word}
\EndType
\Type DEPQ
	\State array [\MIN..\MAX] of Node* \x{Head}
	\State array [\MIN..\MAX] of Node* \x{LastDeleted}
	\State CC-Synch $C_{min}, C_{max}$
\EndType
\end{algorithmic}
\caption{Object types used by list-based DEPQ.\label{list-type}}
\end{algorithm}

Our example DEPQ applies the construction of \Cref{generic-alg}
to two priority queues \x{MinPQ} and \x{MaxPQ}, built from singly-linked lists $L_{min}$ and $L_{max}$, respectively.
Instead of having two totally separate singly-linked lists, we can save space by sharing nodes between the two lists.
Thus, each element added to the DEPQ is represented by a node
with fields to store the key, pointers to the next node in
each of $L_{min}$ and $L_{max}$, and the \x{reserved} bit.
(See \Cref{list-type}.)

We first give a high-level description of our list-based priority queue.
An \op{Extract} operation deletes a node in two phases.
The node is first logically deleted by marking the \x{next} field of 
the \emph{previous} node.
The logically deleted nodes form a prefix of the list,
and the marking mechanism ensures that no new keys are inserted into this prefix.
Periodically, a chain of logically deleted nodes is physically deleted 
from the front of the list
by advancing the pointer that stores the head of the list.

The nodes beyond the logically deleted prefix are kept sorted by their keys
in increasing order in $L_{min}$ and in decreasing order in $L_{max}$.
This ensures that each \op{Extract} operation removes the correct key.
An \op{Insert} operation traverses the list, passing the logically deleted
prefix until finding the correct location to insert its new key,
and then attempts to insert it with a CAS instruction.
If the CAS fails, the \op{Insert} tries again by traversing forward
from the location where it failed.

Since \op{ExtractMin} operations naturally involve a hotspot of contention
at the front of the list, we follow the approach of 
Lind\'{e}n and Jonsson \cite{LJ13} to make
them lightweight.  The logical deletion by an \op{ExtractMin} requires
only a single successful read-modify-write instruction (which can be a CAS
instruction or a \op{FetchAndOr}, if available).
The physical deletion is performed periodically by a separate \op{SL-UpdateHead} routine that advances the head pointer to the last logically deleted node.

As described in \Cref{combining}, we use one instance of the CC-Synch
algorithm to coordinate \op{Extract} operations on each end of the DEPQ.
We add the following optimization.
To perform a batch of \op{Extract} operations, the combiner first
logically deletes a sequence of nodes at the beginning of the
list.  Then, it updates the head pointer just once to physically remove all of them.

\subsection{Detailed Description}

\begin{algorithm}
\caption{Pseudocode for list-based DEPQ.\label{list-code}}
\begin{algorithmic}[1]
\setcounter{ALG@line}{99}
{\small
\Function{\op{Insert}}{Key $k$}\Comment{insert $k$ into DEPQ}
    \State Node* \x{node} := \textbf{new} Node with \x{key} $k$ and \x{reserved} bit 0\label{L:newNode}
    \State \op{SL-Insert}(\x{node}, \MIN)\label{L:insMin} \Comment{insert into singly-linked list $L_{min}$}
    \State \op{SL-Insert}(\x{node}, \MAX)\label{L:insMax} \Comment{insert into singly-linked list $L_{max}$}
\EndFunction

\medskip

\Function{\op{ExtractMin}}{} : Key \Comment{remove and return min key from DEPQ}
    \State \Return $C_{min}$.\op{CCSynch}(\op{SL-Extract}(\MIN), \op{SL-UpdateHead}(\MIN)) \label{L:DeleteMin}
\EndFunction

\medskip

\Function{\op{ExtractMax}}{} : Key\Comment{remove and return max key from DEPQ}
    \State\Return $C_{max}$.\op{CCSynch}(\op{SL-Extract}(\MAX), \op{SL-UpdateHead}(\MAX)) \label{L:DeleteMax}
\EndFunction

\medskip

\Function{\op{SL-Insert}}{Node* \x{node}, int \x{type}} \Comment{insert $node$ into $L_{type}$}
	\State Node* \x{pred} := \x{Head}[\x{type}]\label{init-pred}
	\State Node* \x{curr}
	\State boolean \x{marked}
	\Repeat \Comment{repeatedly try inserting node}\label{L:ins0-repeat}
		\State $\langle \x{curr}, \x{marked}\rangle :=  \x{pred}\rightarrow \x{next}[\x{type}]$\label{L:readcurr1}
		\While{$\op{Before}(\x{curr}\rightarrow key, k, \x{type}) \mbox{ or } \x{marked}$}\label{L:ins-while}\Comment{advance to correct location}
			\State $\x{pred} := \x{curr}$
			\State $\langle \x{curr},\x{marked}\rangle := curr\rightarrow \x{next}$\label{L:readcurr2}
		\EndWhile\label{L:ins-endwhile}
	 	\State $\x{node} \rightarrow \x{next}[\x{type}] := \langle \x{curr},0\rangle$\Comment{set \x{next} field of new node}\label{L:ins0-newSucc}
	\Until \op{CAS}($\&\x{pred} \rightarrow \x{next}[\x{type}], \langle \x{curr}, 0\rangle, \langle \x{node}, 0\rangle$)\Comment{try  to insert \x{node} between \x{pred} and \x{curr}}\label{L:ins0-cas}
\EndFunction

\medskip

\Function{\op{Before}}{Key $k_1$, Key $k_2$, int \x{type}} : boolean \Comment{should $k_1$ come before $k_2$ in list $L_{type}$?}
	\If{$\x{type} = \MIN$} \Return $k_1<k_2$
	\Else\ \Return $k_2 < k_1$
	\EndIf
\EndFunction

\medskip

\Function{\op{SL-Extract}}{int \x{type}} : Key \Comment{\op{ExtractMin} or \op{ExtractMax}, depending on \x{type}}
    \While{true} \label{L:reserved-while}
    	\State \Linecomment{repeatedly try to logically delete and reserve a node}
		\State Node* $\x{last} := \x{LastDeleted}[\x{type}]$ \label{L:readLastDel}   
        \If {$\x{last} \rightarrow \nt = \x{nil}$}\label{L:readLastNext}\Comment{if list is empty}
     	    \State \Return $nil$\label{L:return-nil} \Comment{return $nil$ to indicate empty DEPQ}
	    \Else
			\State $\x{last} := \op{FAO}(\&\x{last} \rightarrow \x{next}[\x{type}], 1$)\label{L:mark}\Comment{logically delete first undeleted node in list}
			\State $\x{LastDeleted}[\x{type}] := \x{last}$\Comment{advance \x{LastDeleted}}\label{L:writeLastDel}
        	\If{$\op{TestAndSet}(\x{last} \rightarrow \x{reserved}) = 0$}\label{L:reserve}\Comment{if \op{TestAndSet} successfully reserves node}
				\State\Return $\x{last}\rightarrow \x{key}$\Comment{return key of reserved node}
			\EndIf
		\EndIf
    \EndWhile\label{L:reserved-endwhile}
\EndFunction

\medskip

\Function{\op{SL-UpdateHead}}{int \x{type}} 
	\If{$\head \neq \ld$} \Comment{if head needs updating}
		\State $\head := \ld$\label{L:writeHead} \Comment{physically delete prefix}

	\EndIf
\EndFunction
}
\end{algorithmic}
\label{alg:one-level-sl}
\end{algorithm}

Pseudocode for the list-based DEPQ is provided in \Cref{alg:one-level-sl} using the data types defined in \Cref{list-type}.
We use the convention that shared variables begin with an upper-case letter,
whereas process-local variables begin with a lower-case letter.
The shared array $\x{Head}$ stores two pointers, pointing to the first node of $L_{min}$ and $L_{max}$.
The shared array \f{LastDeleted} holds two pointers to the last logically deleted element of $L_{min}$ and $L_{max}$.
We use $\MIN=0$ and $\MAX=1$ to index the array entries for $L_{min}$ and $L_{max}$.
Each node stores a key and a \x{reserved} bit to coordinate \op{ExtractMin} and \op{ExtractMax}
operations, as described in \Cref{generic}.
The \f{next}[\MIN] and \f{next}[\MAX] fields of each node point 
to the next node in $L_{min}$ and $L_{max}$, respectively.
We assume that the least significant bit of each \x{next} field can be used
to indicate whether the pointer is \emph{marked}.  
Recall that this indicates that the node pointed to is logically deleted.
In the pseudocode,
we use the notation $\langle p, m\rangle$ to describe the pointer and the mark bit 
stored together in a \x{next} field, since they are both accessed atomically.

To simplify the code, we initialize the data structure with one logically deleted node to ensure that there is always a logically deleted node at the beginning of each list.  
Initially, \x{Head}[\MIN] and \x{Head}[\MAX] both point to this 
dummy node, whose \x{next}[\MIN] and \x{next}[\MAX] pointers are both \x{nil}.

\boldop{Insert} creates a new node with key $k$
and inserts it into the DEPQ by inserting it into $L_{min}$ first (line~\ref{L:insMin})
and then into $L_{max}$ (line~\ref{L:insMax}).
The \op{SL-Insert} function performs the insertion into one of the two singly-linked lists.
The parameter $\x{type}$ determines whether it inserts into $L_{min}$ or $L_{max}$.
To accomplish the insertion, \op{SL-Insert} first searches for the predecessor node \x{pred} and successor node \x{succ} of the node to be inserted (lines \ref{L:readcurr1}--\ref{L:ins-endwhile}). 
This search skips marked edges and uses the function \op{Before} to test whether one key should appear before another in the list.
The result of \op{Before} depends on \x{type}, since the elements
appear in opposite orders in the two lists.
The new node is inserted into the list by a successful \CAS\ at line~\ref{L:ins0-cas}.
The \CAS\  may fail either because another \op{Insert} operation changed \x{pred}'s \x{next} pointer to insert some other node, or because an \op{Extract} operation marked \x{pred}'s \x{next} pointer. 
Therefore, a \op{SL-Insert} tries repeatedly to add the node into the list 
using the repeat loop at line \ref{L:ins0-repeat}--\ref{L:ins0-cas}
until the \CAS\ succeeds.
Because logically deleted nodes form a prefix of the list and nodes are inserted after this prefix, 
the search for the location to insert the node in each iteration can continue
from where it attempted the CAS in the previous iteration.

The DEPQ uses two instances $C_{min}$ and $C_{max}$ of the CC-Synch algorithm,
one for each end of the list.
\boldop{ExtractMin} and \boldop{ExtractMax} 
call the CC-Synch algorithm on line \ref{L:DeleteMin} or \ref{L:DeleteMax}
to perform the
deletion from the corresponding list.  
We assume a call to \op{CCSynch}($f_1, f_2$) announces a request for a
combiner to perform a function call
$f_1$ on behalf of the process, and call $f_2$ just once at the end of a batch of operations.
In our algorithm, the combiner performs an \op{SL-Extract}
for each \op{Extract} operation that it serves, 
and then performs one call to \op{SL-UpdateHead}
at the end of the batch.

\boldop{SL-Extract} logically deletes a node from $L_{min}$ or $L_{max}$ 
(depending on whether $type$ is \MIN\ or \MAX),
updates $\f{LastDeleted}[\x{type}]$ and returns the key of the deleted element. 
\op{SL-Extract}\ starts from the node \ld\ and
traverses the list (lines~\ref{L:reserved-while}--\ref{L:reserved-endwhile}) marking the next pointers of nodes (lines~\ref{L:mark}), and advancing
\x{LastDeleted}[\x{type}] until it manages to change the reserved bit 
of a node from 0 to 1 with the \op{TestAndSet} of line~\ref{L:reserve}. 
This ensures that \ld\ points to the last logically deleted node in the list.
The key of that node is returned to the combiner, which informs
the process that requested the \op{Extract} operation.
\op{SL-Extract} can update \ld\ using a simple write at line \ref{L:writeLastDel}
because CC-Synch ensures there is only one active instance of 
\op{SL-Extract} at a time.

At the end of a batch of \op{SL-Extract} operations, the combiner
calls \boldop{SL-UpdateHead} to physically delete the prefix of logically deleted nodes,
except the last one, which \ld\ points to.
It does this  
by changing \head\ to point to \ld\ at line~\ref{L:writeHead}.

It is worth noticing that the DEPQ is \emph{not} a doubly-linked list:  nodes might not be in exactly opposite orders in the two lists.
\Cref{twist} shows an example, starting with the configuration shown in part (a), where
1 is logically deleted in $L_{min}$ and 5 is logically deleted in $L_{max}$.
Suppose \op{Inserts} of  3 and 4 run concurrently with an \op{ExtractMax}.
Keys 3 and 4 are inserted into $L_{min}$ in increasing order and key 3 is inserted
into $L_{max}$.
Then the \op{ExtractMax} logically deletes~3 from $L_{max}$ by  marking the
pointer to it.
The result is shown in part (b) of the figure.
Finally, key 4 is inserted into $L_{max}$ between the
logically deleted prefix (nodes 5 and 3) and node 2.  The result is shown in part (c).
Interestingly, the DEPQ is linearizable despite the lack of consistency between the two singly-linked lists.
Trying to maintain a consistent doubly-linked list would require more
complex synchronization~\cite{Sha15,ST08}.

\begin{figure}
\begin{center}
\input{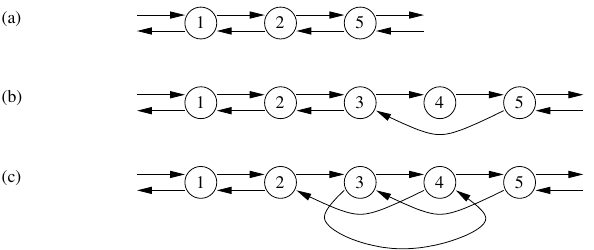_t}
\end{center}
\vspace*{-4mm}
\caption{An example of the list-based DEPQ where orders in the two lists are different.  Part (a) shows a state where 1 is logically deleted from $L_{min}$ and 5 is logically deleted from $L_{max}$.
Then, (b) shows the result of inserting 3 and 4 into $L_{min}$ and inserting 3 into $L_{max}$, followed by the logical deletion of 3 from $L_{max}$.
Then, (c) shows the result of inserting 4 into $L_{max}$.\label{twist}}
\end{figure}

\subsection{Linearizability}

Operations on $L_{min}$ and $L_{max}$ use disjoint fields to represent
the two priority queues \x{MinPQ} and \x{MaxPQ}.
So we can focus on one of the lists to show that it correctly implements
a linearizable priority queue.  The linearizability of the DEPQ follows
from the argument in \Cref{generic}.

We say that a node becomes \emph{logically deleted} when 
a pointer to the node is marked by line \ref{L:mark}.
The dummy node initially in the list is considered logically deleted right
from the initial configuration.

\begin{invariant}\label{head-deleted}
\head\ and \ld\ point to logically deleted nodes.
\end{invariant}
\begin{proof}
This is true initially since both \head\ and \ld\ point to the one node in the list.
When \ld\ is updated on line \ref{L:writeLastDel},
the previous line ensures that the node it points to has been logically deleted.
Whenever \head\ is updated (on line \ref{L:writeHead}), the value
of \ld\ is copied into it, so this step preserves the invariant.
\end{proof}

The following lemma guarantees that insertions are performed at the correct
location in the list.

\begin{lemma}\label{pred-curr}
Consider an invocation of \op{SL-Insert}($k,\x{type}$).
Let $pr$ and $cr$ be the values of its variables $pred$ and $curr$ 
when the loop at line \ref{L:ins-while} terminates.
At the last preceding execution of line \ref{L:readcurr1} or \ref{L:readcurr2}, the following hold.
\begin{enumerate}
\item \label{pr_key} $\op{Before}(\x{pr}\rightarrow \f{key}, k, type)$ or \x{pr} is logically deleted,  
\item \label{cr_key} $\op{Before}(k,\x{cr}\rightarrow \f{key})$,  
\item \label{pr_cr} $\x{pr}\rightarrow \nt = cr$, and
\item \label{cr_unmarked} $\x{cr}\rightarrow \nt$ is unmarked.
\end{enumerate} 
\end{lemma}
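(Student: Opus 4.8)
The plan is to read items~\ref{cr_key}, \ref{pr_cr} and the fact that $pr\rightarrow\nt$ is unmarked straight off the semantics of the last search step, to obtain item~\ref{pr_key} from a loop invariant of \op{SL-Insert}, and to reduce item~\ref{cr_unmarked} to a structural invariant about the shape of the list, which I expect to be the main obstacle. Let $t$ be the time of the last execution of line~\ref{L:readcurr1} or~\ref{L:readcurr2} before this termination of the loop at line~\ref{L:ins-while}. Either instruction reads the \nt\ field of the node that $pred$ currently points to, and since the loop exits immediately afterwards, $pred$ is not reassigned again, so that node is $pr$; the read is atomic, so $pr\rightarrow\nt = \langle cr,m\rangle$ at time $t$, where $m$ is the bit read. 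The loop guard $\op{Before}(cr\rightarrow\f{key},k,type)\lor m$ is false at $t$, so $m=0$ --- which is item~\ref{pr_cr} and the statement that $pr\rightarrow\nt$ is unmarked --- and $\neg\op{Before}(cr\rightarrow\f{key},k,type)$. Since all inserted keys are distinct, $k$ does not yet occur in $L_{type}$, so $cr\rightarrow\f{key}\ne k$ and hence $\op{Before}(k,cr\rightarrow\f{key},type)$, i.e.\ item~\ref{cr_key}; this is stable because \f{key} fields are never modified. (If the search has run off the end so that $cr=\x{nil}$, we read these items under the convention that \x{nil} follows every key, making items~\ref{cr_key} and~\ref{cr_unmarked} vacuous.)

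For item~\ref{pr_key} I would establish the invariant that throughout a call \op{SL-Insert}$(k,type)$ the current value of $pred$ satisfies ``$\op{Before}(pred\rightarrow\f{key},k,type)$ or $pred$ is logically deleted''. This holds at line~\ref{init-pred} because \head\ points to a logically deleted node by Invariant~\ref{head-deleted}. The only reassignment of $pred$ is $pred := curr$ in the loop body, which runs only because its guard held at the preceding test: if $\op{Before}(curr\rightarrow\f{key},k,type)$ then the first disjunct holds of the new $pred$; otherwise the last bit read was $1$, so the \nt\ pointer to $curr$ is marked and $curr$ is logically deleted. Between reassignments the property persists, since $\op{Before}$ refers only to the immutable \f{key} field and, by the fact below that marks are never cleared, a logically deleted node stays logically deleted. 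Item~\ref{pr_key} is this invariant at time~$t$.

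The persistence just used, and item~\ref{cr_unmarked}, both rest on the observation that \emph{once a \nt\ field is marked it is never changed}: the only writes to a node's \nt\ field are line~\ref{L:ins0-newSucc} (into a freshly allocated, not-yet-linked node, with mark $0$), the \CAS\ of line~\ref{L:ins0-cas} (whose expected and new values are both unmarked), and the $\op{FAO}(\cdot,1)$ of line~\ref{L:mark} (which only sets the mark bit and is idempotent). Combined with Invariant~\ref{head-deleted}, this also gives an auxiliary invariant: whenever $v\rightarrow\nt$ is marked, $v$ is logically deleted, since the only line that marks a \nt\ field is line~\ref{L:mark}, which applies $\op{FAO}$ to $\ld\rightarrow\nt$, and \ld\ points to a logically deleted node.

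Item~\ref{cr_unmarked} is the crux. Suppose $cr\rightarrow\nt$ is marked at time~$t$. By the auxiliary invariant $cr$ is then logically deleted, so some node $w$ has $w\rightarrow\nt = \langle cr,1\rangle$ at~$t$; but item~\ref{pr_cr} with $m=0$ gives $pr\rightarrow\nt = \langle cr,0\rangle$ at~$t$, so $w\ne pr$ and two distinct nodes point to $cr$ at time~$t$. To rule this out I need the structural invariant that the nodes reachable from \head\ always form a simple acyclic list --- equivalently, no node ever has two predecessors --- which I would prove (if it is not already available as an earlier invariant) by a direct induction over the execution: the \CAS\ of line~\ref{L:ins0-cas} relinks one edge while splicing in a fresh node, line~\ref{L:mark} alters no pointer, and line~\ref{L:writeHead} only advances the head over logically deleted nodes. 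Phrased as in the high-level description, this invariant says the logically deleted nodes form a prefix of the list: since $pr\rightarrow\nt$ is unmarked, $cr$ is not logically deleted, so the node following $cr$ is not logically deleted either and $cr\rightarrow\nt$ is unmarked. I expect formulating and maintaining this list-shape invariant to be the only genuinely delicate step; granted it, the four items follow as above.
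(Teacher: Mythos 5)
Your treatment of items~\ref{pr_key}--\ref{pr_cr} and of the fact that $\x{pr}\rightarrow\nt$ is unmarked matches the paper's proof almost exactly: item~\ref{pr_key} is handled by the same loop invariant (justified by \Cref{head-deleted} for the initial value of $\x{pred}$ and by the loop guard for each reassignment), while items~\ref{cr_key} and~\ref{pr_cr} fall out of the loop-exit condition together with key distinctness. The real divergence, as you predicted, is item~\ref{cr_unmarked}.

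For item~\ref{cr_unmarked} the paper's proof is a single sentence: it asserts that the claim ``follows from the exit condition of the loop,'' and adds that marked edges are never changed so that if $\x{cr}\rightarrow\nt$ is unmarked at line~\ref{L:ins-while} it was unmarked earlier as well. Your proposal instead argues by contradiction: if $\x{cr}\rightarrow\nt$ were marked then $\x{cr}$ would be logically deleted (via your auxiliary invariant that a node whose outgoing edge is marked is itself logically deleted, which you derive from \Cref{head-deleted} plus the observation that line~\ref{L:mark} is the only place a mark is ever set), and hence two distinct nodes would both point to $\x{cr}$ at time~$t$, which you then rule out using a list-shape invariant (no two predecessors, or the logically-deleted-prefix property). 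Two remarks about this. First, you are right to be suspicious of the paper's terse justification: the loop-exit condition ($\x{marked}=0$) is a statement about the mark bit read from $\x{pr}\rightarrow\nt$, not about $\x{cr}\rightarrow\nt$, so ``follows from the exit condition'' does not literally hold for item~\ref{cr_unmarked} as stated. Most likely the paper's item~\ref{cr_unmarked} is a slip for ``$\x{pr}\rightarrow\nt$ is unmarked,'' which is what the exit condition does give and is the only version of the fact the paper ever needs (the place it matters --- the \CAS\ of line~\ref{L:ins0-cas} --- is in any case justified independently by the \CAS\ semantics). Second, if item~\ref{cr_unmarked} really is about $\x{cr}\rightarrow\nt$, then your route has a circularity that you do not flag: the structural facts you appeal to (no node has two list predecessors; the logically deleted nodes form a prefix) are precisely the content of \Cref{big-inv}, whose proof in the paper cites \Cref{pred-curr}. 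You would therefore need to fold \Cref{pred-curr} and \Cref{big-inv} into a single simultaneous induction over the execution rather than proving them in the order the paper uses; as written, your argument assumes an invariant that is established only afterwards.
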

\begin{proof}
We argue that \emph{every} time line \ref{L:readcurr1} or \ref{L:readcurr2}
is executed, either $\op{Before}(\x{pred}\rightarrow \f{key}, k, type)$ or \x{pred} is logically deleted.
We prove this by induction on the number of time line \ref{L:readcurr1} or \ref{L:readcurr2} has been executed by \op{SL-Insert}.
The first time an \op{SL-Insert} executes line \ref{L:readcurr1}, \x{pred} was set to the \x{Head} node of the list on line \ref{init-pred}, so that node is logically deleted by \Cref{head-deleted}.
Consider any subsequent execution of line \ref{L:readcurr1} or \ref{L:readcurr2}
and assume, as the induction hypothesis, that the claim holds for previous executions of the lines.
If line \ref{L:readcurr1} is executed, the value of \x{pred} has not changed since
the last previous execution of line \ref{L:readcurr1} or \ref{L:readcurr2},
so the claim holds by the induction hypothesis.
Line \ref{L:readcurr2} is executed only if the test on line \ref{L:ins-while}
was true, so the claim follows.
Thus, Claim \ref{pr_key} holds.

When the the loop at line \ref{L:ins-while} exits, $\op{Before}(\x{cr}\rightarrow \f{key},k,\x{type})$ is false.  Since we assume all inserted keys are unique, $\op{Before}(k,\x{cr}\rightarrow \f{key},\x{type})$ is true.
Claim \ref{pr_cr} is true because \x{cr} is read from $\x{pr}\rightarrow \nt$ at the last preceding execution of line \ref{L:readcurr1} or \ref{L:readcurr2}.
Claim \ref{cr_unmarked} follows from the exit condition of the loop at line \ref{L:ins-while}; marked edges are never changed, so if $\x{cr}\rightarrow\nt$ is unmarked at line \ref{L:ins-while}, it was unmarked at all times before that.
\end{proof}

We say a node is \emph{in the list $L_{type}$} if it can be reached by following 
\nt\ pointers from the node \head\ points to.
Let $L_{type}^{del}$ be the set of logically deleted nodes in $L_{type}$.
The following invariant, illustrated in \Cref{list-inv-fig}, describes the structure of
the list.

\begin{figure}[t]
\begin{center}
\input{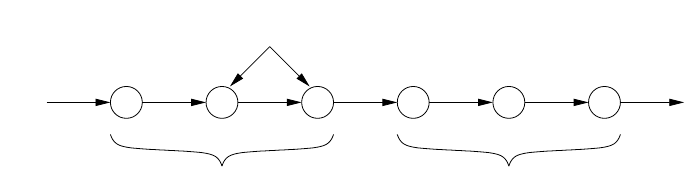_t}
\end{center}
\vspace*{-4mm}
\caption{Structure of the list described in \Cref{big-inv}.  Marked pointers are indicated with the letter $m$.\label{list-inv-fig}}
\end{figure}

\begin{invariant}\label{big-inv}
The following claims hold in every configuration.
\begin{enumerate}
\item\label{well-formed}
The \nt\ pointers starting from \head\ form a finite linked list.
\item\label{prefix}
$L_{type}^{del}$ is a prefix of the list $L_{type}$.
\item\label{sorted}
The suffix of the list $L_{type}$ after the prefix $L_{type}^{del}$ is sorted by the $key$ field.
\item\label{prefix-marked}
\ld\ is either the second-last node of the prefix $L_{type}^{del}$ if an instance
of \op{SL-Extract}(\x{type}) has executed the FAO  of line~\ref{L:mark}, 
but it has not yet executed the write to \x{LastDeleted}[{\x type}] on line~\ref{L:writeLastDel}; or the last node of the prefix $L_{type}^{del}$ otherwise.
\item\label{last-in-list}
In any instance of \op{SL-Extract}$(\x{type})$, the variable \x{last} points to a node in  $L_{type}^{del}$.
\end{enumerate}
\end{invariant}
\begin{proof}
In the initial configuration, the \nt\ pointers starting from \head\ form a linked list of a single node.
That node is logically deleted and \ld\ points to it.
No instances of \op{SL-Extract} are active.
So all claims hold.

We show all claims are preserved by each step $s$.
Assume they hold prior to $s$.

Line~\ref{L:ins0-newSucc} updates the \nt\ field of a node that was 
created on line~\ref{L:newNode} and has never previously been connected in $L_{type}$. 
Thus, if $s$ is the execution of line~\ref{L:ins0-newSucc}, all claims still hold after $s$. 

Assume now that $s$ is an execution of a successful \op{CAS} on line~\ref{L:ins0-cas} by an insert operation \op{op}.
It changes $\x{pred}\rightarrow \nt$ from \x{curr} to the new node \x{node},
whose \nt\ field was set to \x{curr} on line \ref{L:ins0-newSucc}.
Thus, this step has the effect of inserting \x{node} into the linked
list $L_{type}$ between \x{pred} and \x{curr}, which preserves claim \ref{well-formed}.
Since the \op{CAS} succeeds only if the \nt\ pointer being changed was unmarked, the insertion does not occur in the prefix of logically deleted nodes.
Thus, claims \ref{prefix}, \ref{prefix-marked} and \ref{last-in-list} are not affected by this step.
It remains to prove claim \ref{sorted}.
By \Cref{pred-curr}.\ref{cr_key} \x{node}'s key is before \x{curr}'s key.
By \Cref{pred-curr}.\ref{pr_key} 
either \x{pred}'s key is before \x{node}'s key or \x{pred} is in $L_{type}^{del}$.
In the first case, \x{node} is inserted between \x{pred} and \x{curr}
in the non-deleted suffix of $L_{type}$, as argued above, and claim \ref{sorted} is preserved.
In the second case, \x{node} is inserted as the first node in the non-deleted suffix of $L_{type}$, preserving claim \ref{sorted}, since \x{node}'s key is before \x{curr}'s key.

If step $s$ is an execution of line \ref{L:readLastDel}, it copies
\ld\ into the local variable $last$ of an instance of \op{CombineDelete},
so claim \ref{last-in-list} holds since claim \ref{prefix-marked} is true before $s$.
Similarly, if $s$ is an execution of line \ref{L:writeLastDel},
it copies \x{last} into \ld, so the invariant is also preserved.

Suppose $s$ is an execution of line~\ref{L:mark} of \op{CombineDelete}(\x{type}).
Since combining guarantees that no other \op{CombineDelete}(\x{type}) is active,
claim \ref{prefix-marked} of the induction hypothesis ensures that \ld\ 
points to the last logically deleted node in $L_{type}$.
Then, the \op{FAO} of $s$ simply marks the \nt\ pointer of this last node, extending 
the prefix of logically deleted nodes to one more node.
After $s$, \ld\ points to the second-last node in this prefix, as required
for claim \ref{prefix-marked}.
Step $s$ also updates  \x{last} to point to that
next node, preserving claim \ref{last-in-list}.
So, all claims are preserved.

Assume  now that $s$ is an execution of line~\ref{L:writeHead}, which updates
the head pointer to \ld. 
Since \ld\ was pointing to a reachable node in $L_{type}$ prior to $s$,
this step has the effect of removing a prefix of the list.
This preserves all of the claims.
\end{proof}

\begin{lemma}\label{unmarked-reachable}
Suppose a node is inserted into $type$-list before configuration $C$, but it
is not in the list $L_{type}$ at configuration $C$.  Then, the node's \nt\ field was marked by line \ref{L:mark} of a call to \op{SL-Extract}$(\x{type})$ prior to $C$.
\end{lemma}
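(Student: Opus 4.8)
The plan is to prove the contrapositive-flavored statement directly: a node that was once inserted can only leave the list $L_{type}$ via a physical deletion at line~\ref{L:writeHead}, and such a physical deletion only removes nodes whose \nt\ pointer (well, whose predecessor's pointer, i.e.\ the pointer \emph{into} the node — actually the node itself becomes logically deleted when the pointer \emph{to} it is marked) has already been marked by line~\ref{L:mark}. So first I would set up the framework: fix the node $v$ inserted before $C$, and consider the first configuration $C'\le C$ at which $v$ is not in $L_{type}$. Since $v$ is in the list at $C'$'s immediate predecessor configuration but not at $C'$, the step just before $C'$ must be the one that removed $v$.

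Next I would identify which step can remove a reachable node from $L_{type}$. By inspecting the code, the only lines that modify \nt\ fields of nodes already in the list, or that modify \head, are: line~\ref{L:ins0-cas} (a successful CAS inserting a new node — this only \emph{adds} a node and keeps everything else reachable, as argued in the proof of \Cref{big-inv}, claim~\ref{well-formed}), line~\ref{L:mark} (the FAO marks a pointer but does not change which nodes are reachable), and line~\ref{L:writeHead} (advancing \head). Line~\ref{L:ins0-newSucc} touches only a fresh node not yet in the list. So the step removing $v$ must be an execution of line~\ref{L:writeHead}: it sets \head\ $:=$ \ld, which splices out exactly the nodes strictly before \ld\ in the old list. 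Hence $v$ lay strictly before the node \ld\ pointed to at that moment.

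Then I would invoke \Cref{big-inv}: just before that execution of line~\ref{L:writeHead}, \ld\ points to a node in $L_{type}^{del}$ (by \Cref{head-deleted}, or by claim~\ref{prefix-marked}/\ref{last-in-list} of \Cref{big-inv}), and $L_{type}^{del}$ is a prefix of $L_{type}$ (claim~\ref{prefix}). Since $v$ is strictly before \ld's node in the list, $v$ is also in this logically-deleted prefix $L_{type}^{del}$. By the definition of ``logically deleted,'' the pointer to $v$ was marked by some execution of line~\ref{L:mark} in a call to \op{SL-Extract}$(\x{type})$; and this marking happened at or before that execution of line~\ref{L:writeHead}, which is before $C$. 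That establishes the claim. (A small wrinkle: the dummy node is logically deleted ``from the initial configuration'' without an explicit line~\ref{L:mark}, but the dummy is \head\ initially and is never removed from the list before it is superseded — and in any case the lemma's hypothesis that $v$ ``is inserted before $C$'' excludes the dummy, so this edge case does not arise.)

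The main obstacle I anticipate is the case analysis on steps: I need to be careful and exhaustive about which steps can change reachability in $L_{type}$, in particular to confirm that a successful insert CAS at line~\ref{L:ins0-cas} never disconnects an existing node (it only redirects \x{pred}$\to$\nt\ from \x{curr} to \x{node}, with \x{node}$\to$\nt\ $=$ \x{curr}, so the tail from \x{curr} onward stays reachable), and that the FAO at line~\ref{L:mark} genuinely leaves reachability untouched since it only flips a mark bit. Both of these facts are already essentially contained in the proof of \Cref{big-inv}, so I would cite that proof rather than re-derive them. The rest is a short appeal to the structural invariant.
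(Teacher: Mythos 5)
Your approach matches the paper's: you identify line~\ref{L:writeHead} as the only step that can disconnect a node from $L_{type}$, and then invoke the structural invariant (\Cref{big-inv}, in particular claim~\ref{prefix-marked}) to deduce that every removed node lies in the logically deleted prefix. The paper's own proof is in fact even terser than yours. However, there is a small but genuine mismatch at the end. The lemma asserts that the node's \emph{own} \nt\ field was marked, while your final sentence establishes that ``the pointer \emph{to} $v$ was marked,'' i.e.\ that $v$ itself is logically deleted. These are different statements: marking $u\rightarrow\nt$ logically deletes $u$'s \emph{successor}, not $u$. You also sensed this tension in your opening paragraph (the parenthetical about ``whose predecessor's pointer'') but did not resolve it in the concluding step, and then wrote ``That establishes the claim'' when it does not quite. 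The fix is one more sentence using facts you already have: since $v$ lies strictly before the node \ld\ points to at the moment of line~\ref{L:writeHead}, the node that $v\rightarrow\nt$ points to is also inside the prefix $L_{type}^{del}$; the FAO at line~\ref{L:mark} that logically deleted that successor is exactly the step that marked $v\rightarrow\nt$ (and, because marked pointers are never changed, $v$ was already the predecessor at that time). Your handling of the dummy node is fine; the hypothesis that $v$ was \emph{inserted} indeed excludes it.
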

\begin{proof}
A node can be disconnected from the list $L_{type}$
only by line \ref{L:writeHead}, which updates \head\ to point to \ld.
The claim follows from \Cref{big-inv}.\ref{prefix-marked}.
\end{proof}

By \Cref{big-inv},
the nodes of $L_{type}$ that have not be logically deleted form a
suffix of the singly-linked list data structure, starting with the successor
of \ld, and this suffix is sorted by $key$ field.
\op{SL-Insert} adds a key to this suffix.
To extract the minimum element of $L_{type}$, \op{SL-Extract}
repeatedly tries to logically delete the first node in this suffix
at line \ref{L:mark}.
If we were simply implementing an \op{ExtractMin} operation in a
single-ended priority queue, the key
in that node would be returned.  This would ensure correctness
of the priority queue, since it would maintain the invariant that 
the keys in the suffix would always be equal
to the contents of the priority queue.
Since we are using an \op{ExtractMin} on the single-ended priority queue
implemented by $L_{type}$ inside the DEPQ construction of 
\Cref{generic-alg}, the next step would be to try set the \x{reserved}
bit of this node, and retry an \op{ExtractMin} on $L_{type}$ if that marking fails.
In \Cref{list-code}, we incorporate this attempt to set the \x{reserved}
bit into the \op{SL-Extract} operation, to skip any logically deleted
nodes of $L_{type}$ that have already been claimed by an \op{Extract} operation
at the other end of the DEPQ.
This is a straightforward simplification of the code.
Thus, it follows from the linearizability of \Cref{generic-alg} that 
the DEPQ implementation in \Cref{list-code} is linearizable
using the following linearization points.
\begin{itemize}
\item%
Each \op{Extract} operation
that receives a response of 0 from the \op{TestAndSet} on line \ref{L:reserve} is linearized
at its last execution of line \ref{L:mark}.
\item%
Each \op{Extract} operation that returns \x{nil} at line \ref{L:return-nil}
is linearized when the \nt\ pointer of $last$ is read on the preceding line \ref{L:readLastNext}.
\item%
Now consider an \op{Insert} operation.  It executes two instances $I_1$ and $I_2$ of SL-Insert.  Let
$\x{cs}_1$ and $cs_2$ be the successful executions
of the CAS of line~\ref{L:ins0-cas} by $I_1$ and $I_2$, respectively (if they exist).
The \op{Insert} is linearized at the earlier of
$cs_2$, or
immediately before the \op{Extract} operation on DEPQ that returns its item.
If neither of these events occur, then the \op{Insert} is not linearized.
\end{itemize}

\begin{theorem}
\Cref{list-code} is a linearizable multi-consumer DEPQ.
\end{theorem}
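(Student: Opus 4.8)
The plan is to factor the argument into the three layers from which the data structure was built, and to push almost all of the new work into the first layer. \textbf{Layer one:} show that each singly-linked list $L_{type}$, equipped with \op{SL-Insert} and the logical-deletion core of \op{SL-Extract} (ignoring the \x{reserved} bit), implements a linearizable single-consumer priority queue. \textbf{Layer two:} observe that \Cref{list-code} is exactly \Cref{generic-alg} instantiated with $L_{min}$ as \x{MinPQ} and $L_{max}$ as \x{MaxPQ}, and invoke the dual-consumer DEPQ theorem of \Cref{generic}. \textbf{Layer three:} lift to multiple consumers per end using the correctness of CC-Synch. The heart of the proof is layer one; the other two are essentially invocations of results we already have.

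For layer one I would use the linearization points stated just before the theorem and verify that they are sound, relying on the invariants already proved. By \Cref{big-inv}, in every configuration the nodes of $L_{type}$ reachable from the successor of \ld\ form a finite list sorted by \x{key}, and this suffix is exactly the multiset of keys currently present in the priority queue; physical deletion at line \ref{L:writeHead} removes only nodes of $L_{type}^{del}$ (\Cref{big-inv}.\ref{prefix-marked}, \Cref{unmarked-reachable}), leaves \ld\ and hence the non-deleted suffix untouched, and therefore changes no abstract state, so it needs no linearization point. For \op{SL-Insert}, \Cref{pred-curr} shows its successful \op{CAS} on line \ref{L:ins0-cas} inserts \x{node} between an unmarked predecessor whose key precedes $k$ (or is logically deleted, in which case \x{node} becomes the first node of the non-deleted suffix) and a successor whose key follows $k$, so sortedness and membership are preserved; the linearization point matches the one prescribed (at $cs_2$, the successful \op{CAS} into $L_{max}$, or just before the \op{Extract} that removes the node, exactly as in rule L\ref{ins-lin1}/L\ref{ins-lin2} of \Cref{generic}). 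For the extraction core, the \op{FAO} on line \ref{L:mark} marks the \nt\ pointer of the last logically deleted node (this is where \Cref{big-inv}.\ref{prefix-marked} together with CC-Synch's guarantee of a single active consumer is essential — \ld\ points to the right node), logically deleting the first node of the non-deleted suffix, which by \Cref{big-inv}.\ref{sorted} is the minimum; and when line \ref{L:readLastNext} reads \x{nil} the non-deleted suffix is empty, so returning \x{nil} is correct. Termination of the search loop at line \ref{L:ins-while} and the soundness of restarting it from the failed-\op{CAS} location follow from \Cref{pred-curr} and the fact that marked edges are never changed; these are routine once the invariants are in hand.

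For layer two, the key point is that \Cref{list-code} \emph{is} \Cref{generic-alg} with two cosmetic changes: $L_{min}$ and $L_{max}$ share nodes but use disjoint fields, so the two priority-queue implementations are independent; and the \op{TestAndSet} retry loop of \op{ExtractMin} in \Cref{generic-alg} is merged into \op{SL-Extract}, so that each iteration of the loop at line \ref{L:reserved-while} corresponds to one \op{ExtractMin} on $L_{type}$ (lines \ref{L:readLastDel}--\ref{L:writeLastDel}) followed by one \op{TestAndSet} on the extracted node (line \ref{L:reserve}), with the \x{nil} case corresponding to \op{ExtractMin} returning \x{nil}; the optional \op{Delete} lines of \Cref{generic-alg} are omitted. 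Under this correspondence the list-code linearization points coincide with those of \Cref{generic} (the last \op{FAO} is the $last_e$ step). Since layer one shows each $L_{type}$ is a linearizable single-consumer priority queue and CC-Synch ensures that at most one combiner — hence at most one active \op{SL-Extract}($\x{type}$) — runs at a time, the hypotheses of the theorem of \Cref{generic} hold, giving that \Cref{list-code} is a linearizable dual-consumer DEPQ. Layer three then appeals to the correctness of CC-Synch \cite{FK12}: an instance serves the announced requests one batch at a time through a single combiner, returning to each caller the value the combiner computed, so $C_{min}$ and $C_{max}$ make every \op{ExtractMin}/\op{ExtractMax} on the DEPQ indistinguishable from an \op{SL-Extract} executed in a region with a single consumer; the once-per-batch \op{SL-UpdateHead} call changes no abstract state. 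Hence the multi-consumer DEPQ is indistinguishable from the dual-consumer DEPQ of layer two and is linearizable.

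I expect the main obstacle to be layer one, and within it the bookkeeping around the transient moment between lines \ref{L:mark} and \ref{L:writeLastDel}, when \ld\ lags one node behind the end of $L_{type}^{del}$: getting the invariant \Cref{big-inv}.\ref{prefix-marked}/\ref{last-in-list} to interact cleanly with the claim that the \op{FAO} at line \ref{L:mark} marks exactly the edge out of the current minimum, and matching each loop iteration of \op{SL-Extract} to an \op{ExtractMin}/\op{TestAndSet} pair of \Cref{generic-alg} so that the merged-code linearization points provably coincide with the ones \Cref{generic} prescribes. Everything downstream of ``each $L_{type}$ is a linearizable single-consumer priority queue'' is then a direct application of the theorem of \Cref{generic} and of the correctness of CC-Synch.
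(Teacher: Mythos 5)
Your three-layer decomposition — (1) show each $L_{type}$ with \op{SL-Insert} and the logical-deletion core of \op{SL-Extract} is a linearizable single-consumer priority queue via \Cref{big-inv}, \Cref{pred-curr}, and \Cref{unmarked-reachable}; (2) recognize \Cref{list-code} as \Cref{generic-alg} instantiated on the two lists with the \op{TestAndSet} retry merged into \op{SL-Extract} and the \op{Delete} lines omitted, then invoke the theorem of \Cref{generic}; (3) appeal to CC-Synch to guarantee a single active \op{SL-Extract}$(\x{type})$ per end — is exactly the argument the paper gives, down to identifying that the suffix after \ld\ is the sorted abstract state, that \op{SL-UpdateHead} changes no abstract state, and that the list-code linearization points coincide with those of \Cref{generic}. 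The proposal is correct and takes essentially the same approach as the paper, if anything spelling out the reduction a bit more explicitly than the paper's terse presentation.
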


\section{Discussion}

Our DEPQ in \Cref{example}
uses the standard technique of marking pointers to logically delete nodes.
Some concurrent data structures mark nodes instead of pointers (e.g., \cite{EFRvB10}).
When applying our construction to such a data structure, it might be possible
to use just a single bit to both mark a node as logically deleted from the priority queues and to reserve the node in the DEPQ.  This would reduce the number
of RMW instructions needed for an \op{Extract} operation.

If the priority queue data structures used in our general
construction have garbage collection, then it will work in exactly the same way for the DEPQ.
However, if nodes are shared between the two original data structures
as in the example of \Cref{example},  additional care is required,
since when a process physically deletes a node from one priority queue
it does not know whether the node has also been physically deleted from the other.
Thus, it cannot tell if it is safe to retire this node (i.e., declare
that it is a node that can be deallocated when no other process holds a reference
to it).
One way to handle this issue is to store an additional bit in each node.  When a node is 
physically deleted from the first data structure, a \op{TestAndSet} sets the bit to 1.
When it is physically deleted from the second, the \op{TestAndSet} on this bit will fail,
indicating it is safe to retire the node.
Most existing garbage collection schemes (for example, epoch-based reclamation \cite{Fra04} or hazard pointers \cite{HL+05,M04}) would handle shared nodes with this simple mechanism.
This additional mechanism may not even be required for some garbage-collection schemes,
for example those based on reference counting \cite{DM+01}.

Our construction in \Cref{generic} produces a dual-consumer DEPQ.
Could a more sophisticated synchronization mechanism yield a linearizable
multi-consumer DEPQ directly from two linearizable multi-consumer priority queues?
We introduced a concurrent version of the dual queue method described
in \Cref{related-seq}.
Could  other methods discussed there also be adapted for concurrent DEPQs?

More generally, other sequential data structures can be produced by gluing together two data structures, like 
the threaded BST mentioned in \Cref{introduction}.
Is it possible to take black-box linearizable implementations
of two data structures and combine them in a way that is similar
to our construction, so that the
resulting data structure is linearizable and provides more functionality
than either of its individual components?
This might yield simpler designs than trying to maintain consistency
among all parts by hand (as is done in \cite{CNT14} for a rather
complicated concurrent threaded BST).

\bibliographystyle{plainurl}%
\bibliography{main}

\newpage

\end{document}